\newcommand{\bra}[1]{{\left\langle{#1}\right\vert}}
\newcommand{\ket}[1]{{\left\vert{#1}\right\rangle}}
\newcommand{\qw}[1][-1]{\ar @{-} [0,#1]}
\newcommand{\qwx}[1][-1]{\ar @{-} [#1,0]}
\newcommand{\cwx}[1][-1]{\ar @{=} [#1,0]}
\newcommand{\gate}[1]{*+<.6em>{#1} \POS ="i","i"+UR;"i"+UL **\dir{-};"i"+DL **\dir{-};"i"+DR **\dir{-};"i"+UR **\dir{-},"i" \qw}
\newcommand{\meter}{*=<1.8em,1.4em>{\xy ="j","j"-<.778em,.322em>;{"j"+<.778em,-.322em> \ellipse ur,_{}},"j"-<0em,.4em>;p+<.5em,.9em> **\dir{-},"j"+<2.2em,2.2em>*{},"j"-<2.2em,2.2em>*{} \endxy} \POS ="i","i"+UR;"i"+UL **\dir{-};"i"+DL **\dir{-};"i"+DR **\dir{-};"i"+UR **\dir{-},"i" \qw}
\newcommand{\control}{*!<0em,.025em>-=-<.2em>{\bullet}}
\newcommand{\ctrl}[1]{\control \qwx[#1] \qw}
\newcommand{\rstick}[1]{*!L!<-.5em,0em>=<0em>{#1}}
\newcommand{\lstick}[1]{*!R!<.5em,0em>=<0em>{#1}}
\newcommand{\Qcircuit}{\xymatrix @*=<0em>}
\DeclareMathOperator{\supp}{supp}
\newcommand{\mc}[1]{\mathcal{#1}}
\newcommand{\ms}[1]{\mathsf{#1}}
\newcommand{\C}{\mathbb{C}}
\newcommand{\F}{\mathbb{F}}
\newcommand{\nix}[1]{}
\newcommand{\be}{\begin{eqnarray*}}
\newcommand{\ee}{\end{eqnarray*}}
\newcommand{\ben}{\begin{eqnarray}}
\newcommand{\een}{\end{eqnarray}}
\newcommand{\ba}{\begin{array}}
\newcommand{\ea}{\end{array}}
\newcommand{\bmt}{\left[\begin{array}}
\newcommand{\emt}{\end{array}\right]}
\newtheorem{theorem}{Theorem}
\newtheorem{corollary}[theorem]{Corollary}
\newtheorem{lemma}[theorem]{Lemma}
\newtheorem{proposition}{Proposition}
\newtheorem{remark}{Remark}
\begin{document}

\title{Non-Threshold Quantum Secret Sharing Schemes in the Graph State Formalism}

\author{Pradeep Sarvepalli}
\email[]{pradeep.sarvepalli@gatech.edu}
\affiliation{Department of Chemistry and Biochemistry, Georgia Institute of Technology, Atlanta, GA 30332}
\date{February 15, 2012}

\begin{abstract}
In a recent work, Markham and Sanders have proposed a framework to study quantum  secret sharing  (QSS)  schemes using graph states.  This framework  unified  three classes of QSS protocols, namely, sharing classical secrets over private and public channels,  and sharing quantum secrets. However,  most  work on  secret sharing based on graph states  focused on threshold schemes. In this paper, we focus on general access structures. We show how to realize a large  class of arbitrary access structures using the graph state formalism. We show an  equivalence between  $[[n,1]]$ binary quantum codes and graph state secret sharing schemes sharing one bit. We also establish a similar (but restricted) equivalence between a class of $[[n,1]]$  Calderbank-Shor-Steane (CSS) codes and graph state QSS schemes sharing one qubit. With these results we are able to construct a large class of quantum secret sharing schemes with arbitrary access structures.
\end{abstract}

\pacs{}
\keywords{quantum secret sharing, graph state formalism,  quantum codes, CSS codes, quantum 
cryptography}

\maketitle
\section{Introduction}

Quantum secret sharing (QSS) \cite{hillery99, cleve99} deals with the problem of sharing classical or 
quantum secrets using quantum information. Further, the secret sharing protocol  could be operated in
the presence or absence of eavesdroppers.  In \cite{markham08}, a graph state formalism was proposed 
with a view to unifying all these variants under the same umbrella. This framework was useful in ways
other than unifying the  various quantum secret sharing protocols. For instance, building upon this 
framework, researchers have  been able to  propose new secret sharing protocols  \cite{javelle11},  and make a connection with the 
measurment-based quantum computation model \cite{kashefi09}. 
More recently, it has motivated research in graph theoretic concepts such as weak odd domination 
 \cite{gravier11}.

The graph state framework  does  in principle include non-threshold access
structures for quantum secrets. However, neither \cite{markham08} nor subsequent works 
\cite{keet10,kashefi09,javelle11,gravier11}  provide any procedure to explicitly construct schemes
with arbitrary access structures in the graph state formalism. The graph state framework in 
for quantum secret sharing approaches it from a perspective other than quantum error-correction, in contrast to the theory as developed in \cite{cleve99,gottesman00,smith00}. 
But since any secret sharing protocol is ultimately an error-correcting code, the graph state schemes must be equivalent to those based on
stabilizer codes and the protocols in \cite{markham08} must arise from quantum codes.
But no results are known in this direction. 

The main contribution of this paper is to fill these gaps. We make transparent the connection between the graph state framework and the protocols presented using quantum codes. 
We show an equivalence between $[[n,1]]$ binary quantum codes  and graph state protocols sharing one bit. We show a restricted equivalence between a class of $[[n,1]]$ CSS codes \cite{calderbank98} and protocols sharing one qubit. 
We also  translate  many of the schemes developed using quantum codes into 
those based on the graph state formalism.

We emphasize that our results are constructive and provide concrete details for the construction of the secret sharing
schemes as well as the associated details of recovery. 

We restrict ourselves to the qubit case in this paper, although as shown in \cite{keet10} graph state
secret sharing schemes   can be extended to other alphabet.
The general case involving qudits will be explored elsewhere.

\section{Background}

\subsection{Quantum secret sharing}
We briefly review the pertinent ideas of quantum secret sharing. We assume that the
reader is familiar with  quantum codes and the stabilizer formalism \cite{calderbank98,gottesman97}. In 
a secret sharing scheme,  a dealer distributes an encrypted secret to a collection of players. Then 
certain subsets of players can collaboratively  reconstruct the secret. Those subsets which can recover 
the secret are called authorized sets and those
that cannot are said to be unauthorized sets.  The collection of authorized sets is called the 
access structure of the scheme, which we denote as $\Gamma$. For an access structure to be valid, it must be monotonic, i.e., 
any set that contains an authorized set must also be an authorized set. 
An authorized set is said to be minimal if any proper subset of it is unauthorized. The collection of minimal authorized sets is called the minimal access structure. 
In a  threshold scheme with threshold $k$, any subset consisting of $k$ or more players can access the secret
while those with fewer players cannot. In a general access structure, the authorized sets can be of
different sizes and all subsets of that size need not be authorized. 
A collection of sets $\Gamma_{\rm{gen}}$ is said to generate the access structure $\Gamma$, if every 
authorized set contains some element of $\Gamma_{\rm{gen}}$. 

A secret sharing scheme is
said to be perfect if the unauthorized sets cannot extract any information about the secret. In this 
paper we only consider perfect secret sharing schemes.

When the secret to be shared is classical, the dealer distributes  a set of
orthogonal quantum states that encode the secret. 
The following result, due to Gottesman, states the conditions that must be satisfied by authorized and
unauthorized sets for sharing classical secrets through a QSS scheme. 

\begin{proposition}[Access conditions for classical secrets, \cite{gottesman00}]\label{lm:accessConditions}
Suppose we have a set of orthonormal states $\ket{\psi_i}$ encoding a classical secret. Then a set $T$
is an unauthorized set iff
\begin{eqnarray}
\bra {\psi_i} F \ket{\psi_i} =c(F)\label{eq:unauthSet}
\end{eqnarray}
independent of i for all operators $F$ on $T$. The set $T$ is authorized iff
\begin{eqnarray}
\bra {\psi_i} E \ket{\psi_j} =0 \quad (i\neq j)\label{eq:authSet}
\end{eqnarray}
for all operators $E$ on the complement of $T$.
\end{proposition}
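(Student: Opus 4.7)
The plan is to recognize that the two conditions correspond to the diagonal and off-diagonal parts of a Knill--Laflamme-type characterization, applied to the partition of the players into $T$ and its complement. Information about the secret is accessible to the players in $T$ only through the reduced states $\rho_i^T := \mathrm{Tr}_{T^c}\ket{\psi_i}\bra{\psi_i}$, so both halves of the statement translate into claims about these marginals.

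For the unauthorized condition, I would first note that $T$ is unauthorized if and only if the marginals $\rho_i^T$ are all identical: if they differ, a Helstrom-type measurement on $T$ yields a nonzero distinguishing advantage, violating perfect security; conversely, equal marginals render the $i$-dependence invisible to any operation on $T$ alone. Since for any operator $F$ supported on $T$ one has $\bra{\psi_i} F \ket{\psi_i} = \mathrm{Tr}(F\rho_i^T)$, and the operators on $T$ span $\mathrm{End}(\mc{H}_T)$, which is non-degenerate under the Hilbert--Schmidt pairing, independence of $\rho_i^T$ from $i$ is equivalent to Eq.~\eqref{eq:unauthSet} with $c(F) = \mathrm{Tr}(F\rho^T)$.

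For the authorized condition, I would use that $T$ is authorized iff the $\rho_i^T$ have pairwise orthogonal supports, since perfect reconstruction from $T$ demands a measurement that deterministically identifies the index $i$. Invoking Schmidt decompositions of the $\ket{\psi_i}$ across the bipartition $T\,|\,T^c$, this orthogonality on $T$ is equivalent to the cross partial trace $\mathrm{Tr}_T(\ket{\psi_j}\bra{\psi_i})$ vanishing for $i\neq j$. Finally, since for any operator $E$ on $T^c$ one has $\bra{\psi_i} E \ket{\psi_j} = \mathrm{Tr}\bigl(E\,\mathrm{Tr}_T(\ket{\psi_j}\bra{\psi_i})\bigr)$, and $E$ ranges over a basis of $\mathrm{End}(\mc{H}_{T^c})$, the vanishing of the partial trace is equivalent to Eq.~\eqref{eq:authSet} holding for all such $E$.

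The main obstacle is the ``perfect distinguishability'' characterization for authorization: one must use that the scheme is a \emph{perfect} secret-sharing scheme, so that an authorized set recovers the secret with certainty rather than with merely some probability, in order to upgrade mere distinguishability of the marginals to full orthogonality of their supports. Once this is in place, the remainder is essentially linear-algebraic bookkeeping that transfers conditions between $T$ and $T^c$ via partial traces and dual operator bases.
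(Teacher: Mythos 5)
The paper does not prove this proposition at all: it is quoted verbatim as a background result and attributed to \cite{gottesman00}, so there is no internal argument to compare yours against. Your proof is correct and is essentially the argument given in that reference: unauthorized sets are characterized by equality of the reduced states $\rho_i^T$, authorized sets by orthogonality of their supports, and both conditions are transported to the stated operator identities via $\bra{\psi_i}F\ket{\psi_i}=\mathrm{Tr}\bigl(F\rho_i^T\bigr)$ and $\bra{\psi_i}E\ket{\psi_j}=\mathrm{Tr}\bigl(E\,\mathrm{Tr}_T(\ket{\psi_j}\bra{\psi_i})\bigr)$ together with nondegeneracy of the Hilbert--Schmidt pairing. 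The one step that genuinely needs the Schmidt decomposition --- that orthogonality of the supports of $\rho_i^T$ and $\rho_j^T$ is equivalent to $\mathrm{Tr}_T(\ket{\psi_j}\bra{\psi_i})=0$ --- goes through because the outer products of the Schmidt vectors on $T^c$ are linearly independent, and you correctly flag that the ``iff'' in both halves relies on the paper's standing assumption that the scheme is perfect, so that every set is either fully informative or completely uninformative.
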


If we were to share a quantum secret, then the access structure, in addition to being monotonic, must
also satisfy the no-cloning theorem \cite{cleve99}. This implies that no two authorized sets
are disjoint. In this case the access structure must satisfy the conditions of 
Proposition~\ref{lm:accessConditions} for any state in the space spanned by the encoded states, see
\cite[Theorem~1]{gottesman00}.

\subsection{Review of graph state formalism for quantum secret sharing}

In \cite{markham08}, the quantum secret sharing protocols were classified as follows:
i) CC--This protocol deals with the sharing of classical secrets, where we assume that the players have secure channels.
ii) CQ--In this protocol we share classical secrets where we assume that the channels between the players are susceptible to eavesdropping.
iii) QQ--This protocol shares quantum secrets using quantum channels. 
In this paper we restrict our attention to CC and QQ protocols. 

Let $\mathsf{G}$ be a graph with vertex set $V(\mathsf{G})$. We denote the neighbours of a vertex $v \in V(G)$  as $N_v$. We denote the graph
obtained by deleting the vertex $v$ from $\mathsf{G}$, by $\mathsf{G}\setminus v$. 
The graph state defined on $\mathsf{G}$ is  denoted  $\ket{\mathsf{G}}$.  Recall that the graph state
is a stabilizer state and satisfies $K_v\ket{\ms{G}}=\ket{\ms{G}}$, where 
\ben
K_v = X_v\prod_{u\in N_v} Z_u, \mbox{ for all }v \in V(\ms{G}).\label{eq:stabGen}
\een
We use the notation $K_A=\prod_{i\in A} K_i$. The stabilizer of $\ket{\ms{G}}$ is denoted as 
$S(\ket{\ms{G}})$.

In the  CC quantum secret sharing protocol, the secret bit $s$  
is encoded as 
\ben 
\mc{E} : s\mapsto Z_A^s \ket{\mathsf{G}},
\een
where $Z_A^s = \prod_{i\in A} Z_i^s$.
We denote a CC protocol using the graph $\ms{G}$ and encoding using the set $A$ by 
$(\ms{G}, A)$.
An authorized set $T$ can recover the secret by either performing a 
joint measurement of an appropriate operator  $M\in S(\ket{\ms{G}})$ or by local
measurements and combine these results classically (after
classical communication), in other words through LOCC.

In the QQ protocol, the dealer needs to add an additional ancilla qubit whose
state is the secret to be shared. The dealer then encodes this  state by a procedure similar to 
teleporation. Following this the dealer might have to perform some correction
operations on the encoded state to ensure that the secret has been properly
teleported. The dealer then distributes the qubits to the players. 
In this setting, authorized subsets of players can reconstruct the secret by 
means of suitable  nonlocal operations.

In \cite{kashefi09}, the graph state secret sharing schemes were characterized in terms of
graphical conditions. 
Define the odd neighbourhood of a set $S\subseteq V(\mathsf{G})$ as 
\ben
Odd(S) =  \{v\in V(\mathsf{G}) \mid |N_v\cap D| = 1 \bmod 2 \}\label{eq:oddN}
\een

\begin{proposition}[Authorized sets  for CC protocol, \cite{kashefi09}]\label{prop:authSets}
For the CC classical secret sharing protocols $(G,A)$ of 
\cite{markham08}, the secret can be accessed by a set $S$ if there exists 
$D\subseteq S $  such that 
\ben
D \cup Odd(D) \subseteq S \label{eq:acc0}\\
|D\cap A| =1 \bmod 2\label{eq:acc1}
\een
\end{proposition}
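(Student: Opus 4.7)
The plan is to verify the sufficient condition of Proposition~\ref{lm:accessConditions} by exhibiting, for every set $S$ meeting the graphical hypothesis, an element of the stabilizer $S(\ket{\ms{G}})$ that is supported on $S$ and anticommutes with the encoding operator $Z_A$. The encoded states are $\ket{\psi_0}=\ket{\ms{G}}$ and $\ket{\psi_1}=Z_A\ket{\ms{G}}$, so by Proposition~\ref{lm:accessConditions} it suffices to show that $\bra{\ms{G}}E\,Z_A\ket{\ms{G}}=0$ for every operator $E$ supported on $S^c=V(\ms{G})\setminus S$.

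The key computation will be to simplify the product $K_D=\prod_{v\in D}K_v$. Using Eq.~\eqref{eq:stabGen} and $Z_u^2=I$, the $X$-part of $K_D$ is $X_D$ and the $Z$-part records, for each vertex $u$, the parity $|N_u\cap D|\bmod 2$. Hence up to a $\pm$ sign,
\[
K_D \;=\; X_D\,\prod_{u\in Odd(D)}Z_u,
\]
so the support of $K_D$ is contained in $D\cup Odd(D)$. I will then read off two facts from the hypotheses: first, condition~\eqref{eq:acc0} forces $\supp(K_D)\subseteq S$; second, condition~\eqref{eq:acc1} means that the number of positions at which $K_D$ has an $X$ while $Z_A$ has a $Z$ is $|D\cap A|$, which is odd, so $K_D$ and $Z_A$ anticommute.

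With these two properties in hand the conclusion will follow by a one-line stabilizer trick. For any $E$ acting on $S^c$, the operators $E$ and $K_D$ have disjoint support and therefore commute, so inserting $K_D$ next to $\ket{\ms{G}}$ and pushing it through gives
\[
\bra{\ms{G}}E\,Z_A\ket{\ms{G}}
 = \bra{\ms{G}}E\,Z_A K_D\ket{\ms{G}}
 = -\bra{\ms{G}}K_D\,E\,Z_A\ket{\ms{G}}
 = -\bra{\ms{G}}E\,Z_A\ket{\ms{G}},
\]
forcing the matrix element to vanish. Thus $S$ satisfies the authorization condition~\eqref{eq:authSet}, and by the remark after Proposition~\ref{lm:accessConditions} an explicit recovery operator is available, namely $K_D$ itself, which can be measured jointly on $S$ (or reconstructed from local Pauli measurements on $D\cup Odd(D)$ via LOCC) to read off the secret bit.

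The only real subtlety is bookkeeping: one must be careful that the expansion of $K_D$ produces exactly $X_D Z_{Odd(D)}$ (that every $Z_u$ appearing in the product is counted with the right parity), and that the commutation of $E$ with $K_D$ really is guaranteed by $\supp(K_D)\subseteq S$ rather than by some coincidental Pauli cancellation. Neither step is deep, so I do not expect a genuine obstacle; the proof is essentially a translation of the graphical conditions~\eqref{eq:acc0}--\eqref{eq:acc1} into the algebraic access condition~\eqref{eq:authSet}.
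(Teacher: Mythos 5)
Your proof is correct. Note that the paper itself does not prove this proposition---it is stated as a quoted result from the reference of Kashefi, Markham, Mhalla and Perdrix, so there is no in-paper argument to compare against. Your derivation is the standard one and matches what that reference (and the spirit of Theorem~\ref{th:ccGF4} in this paper) does: collapse $K_D=\prod_{v\in D}K_v$ to $X_D Z_{Odd(D)}$ up to phase, observe that \eqref{eq:acc0} puts its support inside $S$ and \eqref{eq:acc1} makes it anticommute with $Z_A$ (the anticommuting positions being exactly $D\cap A$), and then invoke the stabilizer identity $\bra{\ms{G}}EZ_A\ket{\ms{G}}=\bra{\ms{G}}EZ_AK_D\ket{\ms{G}}=-\bra{\ms{G}}EZ_A\ket{\ms{G}}$ to verify condition~\eqref{eq:authSet}. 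Two small points worth making explicit: taking $E=I$ in your chain of equalities also shows the two encoded states are orthogonal, which is needed before Proposition~\ref{lm:accessConditions} applies at all; and the commutation of $E$ with $K_D$ holds for arbitrary (not just Pauli) operators $E$ on $S^c$ because the supports are disjoint, which is exactly the generality Proposition~\ref{lm:accessConditions} demands.
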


\begin{proposition}[Unauthorized sets for CC protocol, \cite{kashefi09}]\label{prop:unauthSets}
For the CC classical secret sharing protocols of \cite{markham08} on $G$, the secret
cannot be accessed by a set $S$ if there exists a $K\in V(G)\setminus S$ such that
\ben
Odd(K) \cap S =A \cap S\label{eq:noacc}
\een
\end{proposition}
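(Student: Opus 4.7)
The plan is to verify Gottesman's unauthorized criterion~\eqref{eq:unauthSet} of Proposition~\ref{lm:accessConditions} directly for the two encoded states $\ket{\psi_s}=Z_A^s\ket{\ms G}$, $s\in\{0,1\}$. Concretely, I aim to produce a Pauli operator $V$ supported entirely on the complement $S^c=V(\ms G)\setminus S$ with $\ket{\psi_1}=V\ket{\psi_0}$; once this is done, any operator $F$ on $S$ commutes with $V$ (disjoint supports), so
\begin{equation*}
\bra{\psi_1}F\ket{\psi_1}=\bra{\psi_0}V^\dagger F V\ket{\psi_0}=\bra{\psi_0}V^\dagger V F\ket{\psi_0}=\bra{\psi_0}F\ket{\psi_0},
\end{equation*}
giving the required $s$-independence.

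The operator $V$ will be extracted from the stabilizer element $K_K=\prod_{i\in K}K_i\in S(\ket{\ms G})$ associated to the hypothesized subset $K\subseteq V(\ms G)\setminus S$. Unwinding~\eqref{eq:stabGen}, the $X$-support of $K_K$ equals $K$ and the $Z$-support equals $Odd(K)$. By assumption $K\subseteq S^c$, so the $X$-part lies wholly on $S^c$; and by $Odd(K)\cap S=A\cap S$ the restriction of $K_K$ to $S$ is precisely $Z_{A\cap S}$. Hence, up to a global phase $\epsilon\in\{\pm 1,\pm i\}$ absorbing any $Y=-iXZ$ factors arising on sites of $K\cap Odd(K)$, one can write $K_K=\epsilon\, W\otimes Z_{A\cap S}$, where $W$ is a Pauli supported on $S^c$.

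Applying this factorisation to $K_K\ket{\ms G}=\ket{\ms G}$ and solving for $Z_{A\cap S}\ket{\ms G}$ gives $Z_{A\cap S}\ket{\ms G}=\epsilon^{-1}W^{-1}\ket{\ms G}$, and therefore
\begin{equation*}
\ket{\psi_1}=Z_{A\cap S^c}Z_{A\cap S}\ket{\ms G}=\epsilon^{-1}Z_{A\cap S^c}W^{-1}\ket{\ms G}=V\ket{\psi_0},
\end{equation*}
with $V=\epsilon^{-1}Z_{A\cap S^c}W^{-1}$ a Pauli (times a phase) supported on $S^c$. Feeding this into the commutation argument of the first paragraph completes the verification and hence, by Proposition~\ref{lm:accessConditions}, shows that $S$ is unauthorized.

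The only mildly subtle step is the phase bookkeeping when factorising $K_K$ as $\epsilon\, W\otimes Z_{A\cap S}$, since $X_K$ and $Z_{Odd(K)}$ fail to commute on vertices in $K\cap Odd(K)$ and collapse there to Pauli $Y$'s. However, the final computation depends only on $V^\dagger V=I$, which follows from unitarity alone, so the precise value of $\epsilon$ is immaterial to the conclusion. I therefore expect no real obstacle beyond tracking supports and signs in the decomposition of $K_K$.
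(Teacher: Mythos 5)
Your argument is correct and complete. The paper itself imports this proposition from \cite{kashefi09} without reproving it, so there is no in-paper proof to compare against; your route --- take the stabilizer element $K_K=\prod_{i\in K}K_i$, observe that its restriction to $S$ is exactly $Z_{Odd(K)\cap S}=Z_{A\cap S}$ while everything else (including any $Y$'s on $K\cap Odd(K)$) sits on $S^c$, solve $K_K\ket{\ms{G}}=\ket{\ms{G}}$ for $Z_{A\cap S}\ket{\ms{G}}$, and conclude $\ket{\psi_1}=V\ket{\psi_0}$ with $V$ supported off $S$ so that Gottesman's criterion~\eqref{eq:unauthSet} holds --- is precisely the standard proof of this fact, and your remark that the phase $\epsilon$ is irrelevant because only $V^\dagger V=I$ enters is the right way to dispose of the only delicate point. (Equivalently, your $V$ shows the reduced states of $\ket{\psi_0}$ and $\ket{\psi_1}$ on $S$ coincide.) No gaps.
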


The authors of \cite{kashefi09} proved that these two conditions were sufficient and made the observation  that it was open which graphs satisfy them.
That these conditions are necessary as well was shown in  \cite[Lemma~2]{javelle11}. 

\section{Graph state scheme for general access structures}

\subsection{Classical secrets}

In this section we make a connection between
the CC protocol in the graph state formalism and the standard error correction model.
We  establish a correspondence between all graph state schemes sharing one bit and
$[[n,1]]$ binary quantum codes. This provides an alternative characterization
of the access structure of the CC secret sharing protocols. 
Further, Theorem~\ref{th:ccGF4} also generalizes the results of \cite{ps09}, which only
uses CSS codes derived from self-dual codes.

\begin{theorem}\label{th:ccGF4}
Let $Q$ be an $[[n,1,d]]$ quantum code with stabilizer matrix 
\ben
S= \left[\ba{ccc|ccc} I_r & A_1& A_2 & B & 0 &C \\0 & 0 &0 & D & I_{n-r-1} & E\ea\right],\een
where $\text{diag}(B+CA_2^t)= 0$.
Then the  graph $\ms{G} $ with the adjacency matrix $A_{\ms{G}}$ 
\ben
A_{\ms{G}} = \left[ \ba{ccc}B+CA_2^t & A_1 & A_2\\ A_1^t& 0 & 0 \\A_2^t& 0 &0  \ea\right],\label{eq:adjnGF4}
\een
gives rise to a CC quantum secret sharing protocol with $(\ms{G},A)$, where $A=\supp([\ba{ccc} C^t& E^t &1\ea])$. A generating set for the access structure is given by 
\ben
\Gamma_{\rm{gen}} =\left\{ \supp(g) |  g \mbox{ is an encoded } Z \mbox{ operator. } \right\}
\label{eq:ccGenAccess}
\een
\end{theorem}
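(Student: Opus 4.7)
The plan is to reduce the theorem to two separate tasks: (i) identify $\ket{\ms{G}}$ with the logical $\ket{\bar 0}$ state of the code $Q$, with $Z_A$ playing the role of the encoded operator $\bar X$; and (ii) invoke the Gottesman access condition (Proposition~\ref{lm:accessConditions}) to identify the authorized sets of $(\ms{G},A)$ with the supports of encoded $\bar Z$ operators.

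For (i), I would first check that $A_{\ms{G}}$ is a legitimate simple-graph adjacency matrix: symmetry is immediate from the block structure, while the zero-diagonal requirement is precisely the hypothesis $\text{diag}(B + CA_2^t) = 0$. I would then pass to the symplectic representation. The graph-state stabilizer has symplectic matrix $[I_n \mid A_{\ms{G}}]$ since $K_v = X_v\prod_{u\in N_v}Z_u$. By performing symplectic row operations on $S$ and adjoining the row corresponding to $Z_A$, I would reduce $S \cup \{Z_A\}$ to exactly $[I_n \mid A_{\ms{G}}]$, using the pivot blocks $I_r$ and $I_{n-r-1}$ to bring the $X$-part into standard-basis form and the $r{+}(n{-}r{-}1){+}1 = n$ rows to close into the required $I_n$ block. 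The specific formula $B+CA_2^t$ for the upper-left corner of $A_{\ms{G}}$ emerges from this elimination.

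For (ii), once $\ket{\ms{G}}=\ket{\bar 0}$ and $Z_A\ket{\ms{G}} = \bar X\ket{\bar 0}=\ket{\bar 1}$, the CC encoding $s\mapsto Z_A^s\ket{\ms{G}}$ coincides with the logical computational-basis encoding of $Q$. Applying Proposition~\ref{lm:accessConditions} to the orthonormal pair $\{\ket{\bar 0},\ket{\bar 1}\}$, the set $T$ is authorized iff every operator on $T^c$ has vanishing off-diagonal matrix element, equivalently, no representative of $\bar X$ is supported on $T^c$; by the standard duality between $\bar X$ and $\bar Z$ representatives under the symplectic inner product, this is equivalent to the existence of a representative of $\bar Z$ supported on $T$. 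By monotonicity, taking minimal such supports yields the generating set in \eqref{eq:ccGenAccess}.

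The main obstacle lies in step (i): one must verify that the chosen $A$ is such that $Z_A$ indeed commutes with every generator of $S$ (so that $Z_A\in\mathcal{N}(S)$ and is a legitimate logical operator), and that adjoining it produces a symplectic matrix of full rank $n$ equivalent to $[I_n\mid A_{\ms{G}}]$. The commutation requirement translates into a symplectic orthogonality relation between $[\,C^t\ E^t\ 1\,]$ and the $X$-parts of $S$, which must be checked block by block against the structure of the standard form. Once these algebraic identities are in hand, the remaining implications are routine consequences of the stabilizer-code/graph-state equivalence.
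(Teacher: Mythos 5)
Your step (i) contains the real gap. The graph state $\ket{\ms G}$ is \emph{not} equal to $\ket{\bar 0}$, and no amount of symplectic \emph{row} operations on $S$ (with or without an adjoined row) can produce $[I_n\mid A_{\ms G}]$: row operations only change the generating set of the same stabilizer group, but $S$ contains nontrivial pure-$Z$ generators (the block $[0\ 0\ 0\mid D\ I_{n-r-1}\ E]$) while a graph-state stabilizer contains no nontrivial pure-$Z$ element, so the two groups are genuinely different. The missing ingredient is a local Clifford: the paper's proof hinges on the identity $I^{\otimes r}H^{\otimes n-r}\,\overline{X}^s\ket{\overline{0}} = Z_A^s\ket{\ms G}$, i.e.\ Hadamards on the last $n-r$ qubits (column swaps between $X$- and $Z$-parts in the symplectic picture) conjugate $\langle S,\overline{Z}\rangle$ into $[I_n\mid A_{\ms G}]$ and carry $\overline{X}=[0\ E^t\ 1\mid C^t\ 0\ 0]$ to $Z_A$. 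Relatedly, you adjoin the wrong operator: the stabilizer of $\ket{\overline{0}}$ is $\langle S,\overline{Z}\rangle$, not $\langle S,\overline{X}\rangle$; adjoining $Z_A$ (your alias for $\overline{X}$) would describe $\ket{\overline{0}}+\ket{\overline{1}}$, and since $Z_A$ has zero $X$-part it cannot raise the $X$-rank of $S\cup\{Z_A\}$ above $r<n$ in any case. The ``main obstacle'' you flag---checking that $Z_A$ commutes with every generator of $S$---is likewise a frame confusion: $Z_A$ lives in the Hadamard-transformed frame, and as the logical $X$ of the scheme it must \emph{anti}commute with part of the graph-state stabilizer.

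Step (ii) is closer to the paper but leans on an unproved ``standard duality'': that the absence of $\overline{X}$ (and $\overline{Y}$, which you omit) representatives on $T^c$ implies the existence of a $\overline{Z}$ representative on $T$. Only the converse is immediate from anticommutation of disjointly supported logical operators; the direction you need is a cleaning-lemma-type statement. The paper avoids it by instead verifying the \emph{unauthorized} condition of Proposition~\ref{lm:accessConditions} directly: for a set containing no encoded $Z$ support, any Pauli $T$ on it falls into one of the cases $T\notin C(S)$, $T\in S$, or $T\in C(S)\setminus S$, and in each case the diagonal matrix elements are independent of the secret. You should either prove the duality you invoke or switch to that direct check; as written, both halves of your argument rest on assertions that are either false as stated (step (i)) or unestablished (step (ii)).
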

\begin{proof}
One choice of logical $X$ and $Z$ operators for $Q$ is given by 
\ben
 \left[\ba{c} \overline{X} \\ \overline{Z} \ea \right] =\left[\ba{ccc|ccc} 0 & E^t & 1& C^t & 0 &0 \\
0 & 0 & 0 & A_2^t& 0 & 1 \ea \right].
\een
Let $\ket{\overline{0}}$ be the state stabilized by $S$ and $\overline{Z}$
and $\ket{\overline{1}} = \overline{X} \ket{\overline{0}}$. 
Then $I^{\otimes r} H^{\otimes n-r}\overline{X}^s \ket{\overline{0}} = Z_A^s\ket{\ms{G}}$.
Therefore, up to  local Clifford unitaries, the basis states of the CC secret sharing scheme induced by
$(\ms{G},A)$ and the basis states of $Q$ are equivalent. Therefore the secret can be recovered if we can
distinguish between the states $\ket{\overline{0}}$ and $\ket{\overline{1}}$.
Consider 
We can rewrite the encoding for the CC protocol in terms of the basis states of $Q$
as follows:
\be
\mc{E}: s\mapsto \overline{X}^s \ket{\overline{0}} =\ket{\psi_s}.
\ee

If $\omega \subseteq \{1,\ldots, n \}$ contains the support of an encoded $\overline{Z}$ operator, then
we can recover the secret because $\overline{Z} \overline{X}^s \ket{\overline{0}} = (-1)^s \overline{X} 
\ket{\overline{0}} $.  Thus the support of any  encoded $Z$ operator gives an authorized set.

If $\omega $ does not contain the support of a
logical $Z$ operator, then it is an unauthorized set.
Let $T$ be such an operator such that $\omega\supseteq \supp(T) \not\supseteq \supp(\overline{Z}M)$, for 
any   $M\in S$. Let $C(S)$ be the centralizer of $S$. 
If $T \not\in C(S)$, then $T$ is detectable, therefore 
$\bra{\psi_s}  T \ket{\psi_s}  =  0$. If $T\in S$, then $\bra{\psi_s}  T \ket{\psi_s}  =  1$.
If $T\in C(S)\setminus S$ and does not contain the 
support of an encoded $Z$ operator, then it must  be an encoded $X$ or $Y$ operator. Since $\ket{\psi_s} =\overline{X}^s\ket{\overline{0}}$, we have $\bra{\psi_s}  T \ket{\psi_s}   = \bra{\overline{0}}\overline{X}^s T \overline{X}^s \ket{\overline{0}} =  0 $, where we used the fact that $T\ket{\overline{0}} =\alpha \ket{\overline{1}}$ for some $\alpha \in \C$. Therefore, by Lemma~\ref{lm:accessConditions}, $\omega$ is unauthorized. This shows that the access structure 
generated by $\Gamma_{\rm{gen}}$ is complete and must coincide with the access structure as
defined by Propositions~\ref{prop:authSets}~and~\ref{prop:unauthSets}.
\end{proof}

\begin{remark}
The requirement that $B+CA_2^t=0$ is not a restriction because any such code can be transformed through
local Clifford unitaries to a code which satisfies this condition. These two codes will lead to the
same access structure. 
\end{remark}

Our theorem gives a succinct characterization of the access structure, we just need to specify the
stabilizer generators and the encoded $Z$ operator. All the authorized sets can then be enumerated
easily. There is no need to further check any other conditions on the sets. But note that our 
characterization does not give the minimal access structure but rather a generating set for the 
access  structure. If we want to obtain the minimal access structure then we only need to look at
those encoded $Z$ operators which are also minimal in the sense they do not properly contain any other 
encoded $Z$ operator within their support.

\begin{corollary}\label{co:ccGF4}
Let $\ms{G}$ be a connected graph with adjacency matrix $A_{\ms{G}}$ as in equation~\eqref{eq:adjnGF4}, where $A_1$, $A_2$ are chosen arbitrarily and $B+CA_2^t$ is symmetric. Let $A$ be an arbitrary
subset of one of the bipartitions of the graph. Then $(\ms{G},A)$ is a CC quantum secret sharing scheme. 
\end{corollary}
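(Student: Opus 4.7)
My plan is to invert the construction of Theorem~\ref{th:ccGF4}: from $\ms{G}$ and $A$ I exhibit an $[[n,1]]$ stabilizer code whose associated graph is $\ms{G}$ and whose encoded $\overline{Z}$ has support generating the prescribed $A$, so that Theorem~\ref{th:ccGF4} immediately produces the CC QSS scheme $(\ms{G}, A)$.

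First I read the off-diagonal blocks $A_1$ and $A_2$ directly from $A_{\ms{G}}$ and set $M := B + CA_2^{t}$ equal to the top-left block of $A_{\ms{G}}$. By hypothesis $M$ is symmetric, and since $A_{\ms{G}}$ is a genuine adjacency matrix $M$ has zero diagonal, so the diagonal condition $\text{diag}(B+CA_2^{t}) = 0$ needed by Theorem~\ref{th:ccGF4} will hold however I split $M$ into $B$ and $C A_2^{t}$. The block structure of $A_{\ms{G}}$ partitions $V(\ms{G})$ naturally into $V_1 = \{1,\dots,r\}$ and $V_2 \cup V_3 = \{r+1,\dots,n\}$, with $V_2 \cup V_3$ an independent set; by hypothesis $A$ lies entirely in one of these two parts.

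Next I translate the choice of $A$ into vectors $C$ and $E$ specifying the encoded $\overline{X}$. In the case $A \subseteq V_2 \cup V_3$, I first permute coordinates within $V_2 \cup V_3$ so that some element of $A$ occupies the distinguished last position $n$, then set $C = 0$ and choose $E$ so that its support equals $A \cap V_2 = A \setminus \{n\}$; with $C=0$ this yields $B = M$ automatically. The symmetric case $A \subseteq V_1$ is handled by exchanging the roles of the two partite sides, which is nothing other than the local Hadamard equivalence already invoked inside the proof of Theorem~\ref{th:ccGF4}. To close the stabilizer under multiplication one needs commutativity; a short block computation of $X_S Z_S^{t}$ shows that it is symmetric exactly when $B + CA_2^{t}$ is symmetric (given) and $D = A_1^{t} + E A_2^{t}$, so I adopt the latter as the definition of $D$. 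The $I_r$ and $I_{n-r-1}$ blocks guarantee independence of the $n-1$ rows, so the resulting stabilizer matrix defines a bona fide $[[n,1]]$ code, and Theorem~\ref{th:ccGF4} delivers $(\ms{G}, A)$ as a CC QSS scheme.

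The only delicate point, in my view, is the case $A \subseteq V_1$: the canonical encoding of Theorem~\ref{th:ccGF4} puts the distinguished $1$-entry of $\overline{X}$ in $V_3$, so accommodating $A$ on the opposite partite side requires either a block relabeling or, equivalently, the local Hadamard interchange of $X$- and $Z$-parts; verifying that either of these preserves the prescribed adjacency matrix is the one bookkeeping step that warrants care. Everything else is a straightforward block-matrix check, and the crucial ingredient is that the symmetry of $B + CA_2^{t}$ is precisely what is needed for the stabilizer to be abelian.
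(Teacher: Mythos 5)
Your overall strategy---inverting Theorem~\ref{th:ccGF4} by exhibiting an $[[n,1]]$ stabilizer matrix whose associated graph is $\ms{G}$ and whose logical operators have the prescribed support---is the right one (the paper states the corollary without proof, so this is exactly what needs to be supplied). Your treatment of the case $A\subseteq V_2\cup V_3$ is correct: the symplectic computation showing that the first block of rows self-commutes iff $B+CA_2^t$ is symmetric, the forced choice $D=A_1^t+EA_2^t$, and the observation that the diagonal condition concerns $B+CA_2^t=M$ rather than $B$ alone are all accurate, and permuting within the independent part to place an element of $A$ at the distinguished position $n$ preserves the block form of $A_{\ms{G}}$.

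The gap is the case $A\subseteq V_1$, and neither of your two proposed devices closes it when $B+CA_2^t\neq 0$. Exchanging the two parts of the vertex set moves $B+CA_2^t$ into the bottom-right corner of the adjacency matrix, which equation~\eqref{eq:adjnGF4} requires to be zero, so the relabeling is only available for bipartite graphs. The Hadamard interchange fares no better: applying $H$ on the $V_1$ qubits sends the code state to a stabilizer state whose $X$-block is generically singular, hence not a graph state of the \emph{same} graph $\ms{G}$, so it cannot certify the scheme $(\ms{G},A)$. The repair needs no symmetry argument at all, because the column $C$ in the theorem exists precisely to place part of $A$ inside $V_1$: pick any $v\in A$, re-partition with $V_3=\{v\}$, $V_2=\emptyset$ and $V_1=V\setminus\{v\}$, let $A_2$ be the indicator of $N_v$ and $M$ the adjacency matrix of $\ms{G}\setminus v$, and set $C$ equal to the indicator of $A\setminus\{v\}$ with $B=M+CA_2^t$. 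Then $B+CA_2^t=M$ is symmetric with zero diagonal automatically, $\supp([\,C^t\ \ 1\,])=A$, and Theorem~\ref{th:ccGF4} applies. By artificially fixing $C=0$ in your main construction you discarded the one degree of freedom that handles this case (and that in fact shows any nonempty $A$ in any graph works, consistent with the paper's remark (i) following the corollary).
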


Two special cases of are worth highlighting because of their importance. 
\begin{compactenum}[(i)]
\item  $A_1=0$, then it can be
seen that we are covering in effect all possible graphs. Thus every graph leads to a CC quantum 
secret sharing scheme. 
\item  $B+CA_2^t=0$. This corresponds to the situation where $\ms{G}$
is bipartite. These secret sharing schemes are precisely those arising from an $[[n,1]]$ CSS code. 
\end{compactenum}

If $A_2=0$, then the access structure is trivial; the minimal access structure contains a singleton set. 

The framework as developed in \cite{gottesman00}
makes it possible to use mixed states for sharing classical secrets. At the present
it is not clear how to include those schemes in the graph state formalism, since graph states are by definition pure and they lead to pure state quantum secret sharing schemes. Recall that a 
pure state scheme is one in which a pure state is encoded into a pure state. In a mixed state scheme a
pure state could be encoded into a mixed state. Such schemes could be more efficient than the 
pure state schemes. 

\subsection{Quantum secrets}

In this section, we use  the graph state formalism to construct QQ quantum secret sharing  schemes for 
general access structures. Every secret sharing scheme includes a step where the dealer encrypts
the secret before  distributing the shares. In \cite{markham08,keet10}, this was broken down into the following. 
steps. In the first step, the dealer prepares a graph state over the dealer's qubit and the players qubits. Then an ancilla qubit prepared in the secret state is entangled with the dealer's qubit. Then
the ancilla and dealer's qubits are measured in the Bell basis leading to an encoded teleporation onto
the players' qubits. In this paper we simplify these steps by involving only one additional qubit.
We make use of the teleportation scheme to encode into a quantum code using  graph states, see \cite{hein04, grassl11}.

We illustrate this procedure through an example. Consider the graph shown in Fig.~\ref{fig:hamming}. 
Pick any vertex of the graph, say we pick $0$.
The dealer prepares this qubit in the secret state to be shared. Then this qubit is entangled with
the qubits in $N_0$ using controlled-Z gates. Then we measure the dealer's qubit in the $\sigma_x$
basis. If we measure $0$, then the secret has been encoded as desired, otherwise, we need to apply a
correction of the encoded $Z$ on the state. The qubits are then distributed to the players.

\begin{figure}
\begin{tikzpicture}[scale=.75] 
\draw (0,5)[color=black,thick,dashed]--(3,5);
\draw (0,5)[color=black,thick,dashed]--(3,3);
\draw (0,5)[color=black,thick,dashed]--(3,-1);

\draw (0,3)[thick]--(3,3);
\draw (0,3)[thick]--(3,1);
\draw (0,3)[thick]--(3,-1);

\draw[color=red,thick] (0,1)--(3,5);
\draw[color=red,thick] (0,1)--(3,1);
\draw[color=red,thick] (0,1)--(3,-1);

\draw[color=blue,thick] (0,-1)--(3,5);
\draw[color=blue,thick] (0,-1)--(3,3);
\draw[color=blue,thick] (0,-1)--(3,1);

\draw (0,-1) [fill=black] circle (4pt) node [left] {3};
\draw (0,1) [fill=black] circle (4pt) node [left] {2};;
\draw (0,3) [fill=black] circle (4pt) node [left] {1};;
\draw (0,5) [fill=black] circle (4pt) node [left] {0};;

\draw (3,-1)[fill=black]  circle (4pt) node [right] {7};;
\draw (3,1)[fill=black] circle (4pt) node [right] {6};;
\draw (3,3)[fill=black] circle (4pt) node [right] {5};;
\draw (3,5)[fill=black] circle (4pt) node [right] {4};;

\end{tikzpicture}
\caption{(Color online) A general QQ secret sharing scheme from a bipartite graph. All qubits except the dealer's qubit are prepared in the $\ket{+}$ state, while the dealer's qubit (0) is prepared in the secret state.  Then we apply CZ gates along the edges of $\ms{G}$. The dealer's qubit is then measured
in the $\sigma_x$ basis. A correction operator is applied if we measure one. 
 }\label{fig:hamming}
\end{figure}
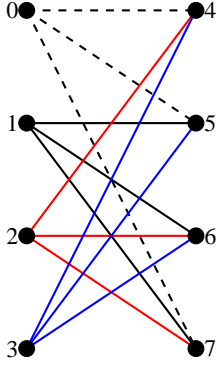

Consider the secret being encoded into $Z_{A}^s\ket{\mathsf{G}\setminus 0}$, where $A=\{4,5,7 \}$.
Then  it can be verified that all the minimal authorized sets given in $\Gamma_{0,\min}$
satisfy both equations~\eqref{eq:acc0}~and~\eqref{eq:acc1}.
\ben
\Gamma_{0,\min} = \left\{\begin{array}{c} \{1,2,7 \}; \{1,3,5\}; \{1,4,6\};  \{2,3,4\}; \\
\{2,5,6\};  \{3,6,7 \}; \{4,5,7\} \end{array}\right\}. \label{eq:hammAccess}
\een

\begin{lemma}\label{lm:bpOdd}
Let $\ms{G}$ be a bipartite graph with adjacency matrix $A_{\ms{G}}$
given by 
\ben
A_\mathsf{G}= \left[\ba{cc} 0 & P\\P^t &0  \ea\right], \mbox{ where } PP^t=I.
\een
Then for any set $D \subseteq B$, where $B$  is in one of the bipartitions,
\begin{compactenum}[(i)]
\item  $Odd(Odd(D))=D$.
\item  $|Odd(D) \cap  N_i| = 0 \bmod 2$ for any $i\in B\setminus D$.
\end{compactenum}
\end{lemma}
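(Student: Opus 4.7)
The plan is to reduce both parts to $\F_2$-linear algebra via the observation that, under the identification of subsets of $V(\ms{G})$ with their characteristic column vectors over $\F_2$, the odd-neighborhood operation is exactly left multiplication by the adjacency matrix: the $v$-th entry of $A_\ms{G}\,d$ is $\sum_{u\in N_v}d_u$, which equals $|N_v\cap D|\bmod 2$. Writing the bipartition as $V(\ms{G})=V_1\cup B$ with $B$ the class containing $D$, the matrix $P$ from the hypothesis acts as the adjacency from $B$ to $V_1$, and $P^t$ goes the other way.

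For part (i), since $\ms{G}$ is bipartite and $D\subseteq B$, every $v\in B$ has all its neighbors in $V_1$ and therefore none in $D$, so $Odd(D)\subseteq V_1$ with indicator $Pd$, where $d$ is the indicator of $D$. Applying the same reasoning to $Odd(D)\subseteq V_1$ shows that $Odd(Odd(D))\subseteq B$ with indicator $P^t(Pd)=(P^tP)d$. Reading the hypothesis $PP^t=I$ as $P$ being an orthogonal binary matrix (square with $P^{-1}=P^t$, as in the Hamming-style examples that motivate the lemma) then yields $P^tP=I$ as well, and we conclude $Odd(Odd(D))=D$.

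For part (ii), apply the same dictionary once more: for $i\in B$ the indicator of $N_i$ is the $i$-th column of $P$, whence
\begin{equation*}
|N_i\cap Odd(D)|\bmod 2 = \bigl(P^t(Pd)\bigr)_i = d_i.
\end{equation*}
This vanishes precisely when $i\notin D$, which is exactly the desired statement for every $i\in B\setminus D$.

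The only step that is not purely mechanical is passing from $PP^t=I$ to $P^tP=I$ in part (i), which requires squareness of $P$; once this is granted (or taken as part of the hypothesis) everything else is routine bookkeeping, given the translation of odd-neighborhoods into multiplication by $P$ or $P^t$. In fact, part (ii) could have been stated more symmetrically as $|Odd(D)\cap N_i|\equiv [i\in D]\pmod 2$, and the asymmetry in the lemma statement just records the case that will actually be used in the sequel.
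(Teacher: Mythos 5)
Your proof is correct and follows essentially the same route as the paper's: both translate the odd-neighborhood operation into multiplication of $\F_2$-characteristic vectors by the adjacency matrix and then invoke $PP^t=I$ (the paper works directly with $A_\ms{G}$ and its square, you work with the blocks $P$, $P^t$, which amounts to the same computation). Your explicit remark that deducing $P^tP=I$ from $PP^t=I$ requires $P$ to be square is a fair point that the paper's proof silently assumes; squareness is indeed implicit in the intended setting, so this is a clarification rather than a divergence.
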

\begin{proof}
Let $g\in\F_2^n$ be such that $\supp(g)=D$. Then $Odd(D) = \supp(gA_{\ms{G}})$
and $Odd(Odd(D)) = \supp(gA_{\ms{G}} A_{\ms{G}}) = \supp(g)$.
Let $h\in \F_2^n$ be such that $\supp(h)=N_i$. Let $h$ correspond to the $i$th row in $A_{\ms{G}}$.
Note that $Odd(D)= \supp(A_{\ms{G}}g^t)$. To show that $|Odd(D)\cap N_i|= 0 \bmod 2$, 
it is enough to prove that $h A_\ms{G} g^t=0 $. This is zero because $h$ is orthogonal to all but the
$i$the column of $A_\ms{G}$. 
\end{proof}

\begin{theorem}\label{th:qqCSS}
Let $\mathsf{G}$ be a bipartite graph whose adjacency matrix $A_{\mathsf{G}}$ is given by 
\ben
A_\mathsf{G}= \left[\ba{cc} 0 & P\\P^t &0  \ea\right], \mbox{ where } PP^t=I.
\een
Then for every vertex $i$ we can define a perfect QQ  quantum secret sharing scheme from $\mathsf{G}$.
The encoding for the quantum secret sharing scheme is given by 
\ben
\mc{E} : a\ket{0}+b\ket{1}\mapsto a\ket{\mathsf{G}\setminus i} +b Z_{N_i}^s\ket{\mathsf{G}\setminus i}.
\label{eq:qqEncrypt}
\een
A generating set for the access structure $\Gamma_i$ is given by the following
\ben
\Gamma_{i,\rm{gen}} &= & \left\{D \cup Odd(D)\setminus i \bigg|\ba{c} D \subseteq V_r \\ |D\cap N_i| = 1 \bmod 2\ea\right\},\label{eq:qqGenAccess}
\een
where $V_r$ is the bipartition of vertices of $\ms{G}$ that does not contain $i$.
The encryption and recovery of the secret are as shown in Fig.~\ref{fig:encrypyQQ}~and~\ref{fig:recoverQQ} respectively. 
\end{theorem}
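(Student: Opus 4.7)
The plan is to analyze the encoding as an $[[n-1,1]]$ stabilizer code and apply the Knill--Laflamme conditions. First I verify the encoding from the teleportation recipe described before Lemma~\ref{lm:bpOdd}: preparing qubit $i$ in the secret state, applying $CZ$ along every edge of $\ms{G}$, and measuring qubit $i$ in the $X$-basis (with correction $Z_{N_i}$ on outcome $1$) leaves the remaining $n-1$ qubits in $a\ket{\ms{G}\setminus i}+b\,Z_{N_i}\ket{\ms{G}\setminus i}$. Setting $\ket{\overline{0}}=\ket{\ms{G}\setminus i}$ and $\ket{\overline{1}}=Z_{N_i}\ket{\ms{G}\setminus i}$, the logical $\overline{X}$ is represented by $Z_{N_i}$; for $S\subseteq V(\ms{G}\setminus i)$, the operator $K_S^{(\ms{G}\setminus i)}$ commutes with $Z_{N_i}$ and is a code stabilizer when $|S\cap N_i|$ is even, and represents logical $\overline{Z}$ when $|S\cap N_i|$ is odd---in both cases supported on $S\cup Odd(S)\setminus i$.

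For the authorized direction, fix $D\subseteq V_r$ with $|D\cap N_i|\equiv 1\pmod 2$ and put $T:=D\cup Odd(D)\setminus i$. Then $K_D^{(\ms{G}\setminus i)}=X_DZ_{Odd(D)\setminus i}$ is a logical $\overline{Z}$ supported on $T$. The key step is to realize a logical $\overline{X}$ also on $T$. Let $D'=Odd(D)\setminus i\subseteq V_l\setminus\{i\}$; by Lemma~\ref{lm:bpOdd}(i), $Odd^\ms{G}(D')=Odd(Odd(D))\triangle N_i=D\triangle N_i$. Since $D'\subseteq V_l$ and $N_i\subseteq V_r$ give $D'\cap N_i=\emptyset$, $K_{D'}^{(\ms{G}\setminus i)}$ is a stabilizer, hence $K_{D'}^{(\ms{G}\setminus i)}\cdot Z_{N_i}=X_{D'}Z_D$ is an equivalent logical $\overline{X}$ supported on $D'\cup D=T$. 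With representatives of both $\overline{X}$ and $\overline{Z}$ available in $T$, the players perform the non-local decoding and recover the secret.

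For the unauthorized direction, I argue the contrapositive: if $T$ contains no element of $\Gamma_{i,\text{gen}}$, then no Pauli on $T$ realizes any of $\overline{X}$, $\overline{Y}$, $\overline{Z}$ on the code. Every $\overline{Z}$- or $\overline{Y}$-representative has the form $K_D^{(\ms{G}\setminus i)}$ or $Z_{N_i}K_D^{(\ms{G}\setminus i)}$ respectively, with $|D\cap N_i|$ odd; decomposing $D=D_l\cup D_r$ along the bipartition (so $|D_r\cap N_i|=|D\cap N_i|$ is odd and $D_r\subseteq V_r$), the $V_l$-part of the $Z$-support in each case contains $Odd(D_r)\setminus i$, and the support of the representative therefore contains $D_r\cup Odd(D_r)\setminus i\in\Gamma_{i,\text{gen}}$. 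The main technical step is the $\overline{X}$ case: any $\overline{X}$-representative is $K_S^{(\ms{G}\setminus i)}Z_{N_i}$ with support $S\cup(N_i\triangle Odd^{(\ms{G}\setminus i)}(S))$. Writing $S=S_l\cup S_r$ and setting $D:=N_i\triangle Odd(S_l)\subseteq V_r$, the orthogonality $PP^t=I$ yields $Odd(N_i)=\{i\}$ and $|Odd(S_l)\cap N_i|\equiv 0\pmod 2$ (because $i\notin S_l$), whence $|D\cap N_i|\equiv |N_i|\equiv 1\pmod 2$, and Lemma~\ref{lm:bpOdd}(i) gives $Odd(D)\setminus i=S_l$; therefore $D\cup Odd(D)\setminus i=(N_i\triangle Odd(S_l))\cup S_l$, which sits inside the support of $K_SZ_{N_i}$ and hence inside $T$.

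The main obstacle is ruling out the intermediate case in which $T$ implements some but not all of the nontrivial logicals---a pathology generic stabilizer codes do admit---which the bipartite structure together with $PP^t=I$ (via Lemma~\ref{lm:bpOdd}) forbids through the three symmetric calculations above. Once all of $\overline{X}$, $\overline{Y}$, $\overline{Z}$ are unavailable on $T$, the Knill--Laflamme conditions (an extension of Proposition~\ref{lm:accessConditions} to arbitrary superpositions $a\ket{\overline{0}}+b\ket{\overline{1}}$, as in \cite[Theorem~1]{gottesman00}) force the reduced state on $T$ to be independent of the secret, completing the proof that $\Gamma_{i,\text{gen}}$ generates the access structure of a perfect QQ scheme.
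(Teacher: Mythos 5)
Your proof is correct in substance and, for the hard direction (completeness of $\Gamma_{i,\rm{gen}}$), takes a genuinely different and cleaner route than the paper. For recovery the two arguments are the same algebra in different clothing: the paper traces an ancilla-assisted circuit (Fig.~\ref{fig:recoverQQ}) through controlled-$K_D$ and controlled-$U_D'$, and its $U_D' = Z_D X_{Odd(D)\setminus i}$ is exactly the logical $\overline{X}$ representative $X_{D'}Z_D$ you build from $K_{D'}Z_{N_i}$ with $D'=Odd(D)\setminus i$; since the theorem explicitly advertises that circuit, one sentence connecting ``both logicals live on $T$'' to Fig.~\ref{fig:recoverQQ} would complete the match. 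The real divergence is in completeness: the paper argues by contradiction on a hypothetical authorized set $A$ outside the generated structure, splitting on the parity of $|A^c\cap N_i|$ and massaging $A$ into a generator via auxiliary sets such as $C=Odd(B_i)\setminus A$ --- a combinatorially delicate argument. You instead enumerate the three nontrivial logical cosets of the punctured $[[n-1,1]]$ code ($K_D$ with $|D\cap N_i|$ odd, $Z_{N_i}K_D$ with $|D\cap N_i|$ odd, and $K_S Z_{N_i}$ with $|S\cap N_i|$ even) and show every representative's support contains a generator; I checked the three computations (no cancellation between the $V_l$- and $V_r$-parts of the supports because the graph is bipartite, $Odd(N_i)=\{i\}$ and $Odd(Odd(S_l))=S_l$ from Lemma~\ref{lm:bpOdd}, and $|N_i|$ odd from $PP^t=I$) and they are right. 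This yields completeness and perfectness in one stroke via Gottesman's criterion, is closer in spirit to the paper's own Theorem~\ref{th:ccGF4} for the CC case, and is arguably the more transparent proof. One small slip to fix: the correction on measurement outcome $1$ in the encoding step is a logical $\overline{Z}$ representative such as $K_j'=X_j\prod_{k\in N_j\setminus i}Z_k$ with $j\in N_i$ (as in Fig.~\ref{fig:encrypyQQ}), not $Z_{N_i}$ --- the latter is your logical $\overline{X}$, and applied to $a\ket{\ms{G}\setminus i}-bZ_{N_i}\ket{\ms{G}\setminus i}$ it produces an encoding of $b\ket{0}-a\ket{1}$ rather than repairing the sign.
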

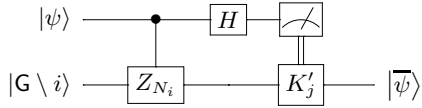
\begin{figure}[htb]
\[
\Qcircuit @C=.7em @R=.4em @! {
  \lstick{\ket{\psi}}  & \ctrl{1} &
    \gate{H} & \meter & \\
  \lstick{\ket{\ms{G}\setminus i}}  & \gate{Z_{N_i}} &
    \qw & \gate{K_j'} \cwx &  
    \rstick{\ket{\overline{\psi}}} \qw
}
\]
\caption{Encrypting the secret state $\ket{\psi}$ for QQ secret sharing using teleportation. The operator $K_j'=X_j\prod_{k\in N_j\setminus i}Z_k$ is such that
$j\in N_i$. It is applied only if the measurement outcome is 1.}\label{fig:encrypyQQ}
\end{figure}

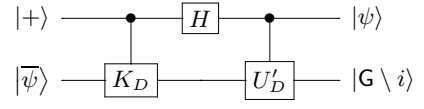
\begin{figure}[htb]
\[
\Qcircuit @C=.7em @R=.4em @! {
  \lstick{\ket{+}}  & \ctrl{1} &
    \gate{H} & \ctrl{1} & \rstick{\ket{\psi}} \qw\\
  \lstick{\ket{\overline{\psi}}}  & \gate{K_D} &
    \qw & \gate{U_D'}  &  
    \rstick{\ket{\ms{G}\setminus i}} \qw
}
\]
\caption{Reconstructing the secret state $\ket{\psi}$ for QQ secret sharing given an authorized set 
$D$ as in equation~\eqref{eq:qqGenAccess}. The operator $K_D=\prod_{j\in D }K_j = \prod_{j\in D}X_j \prod _{k\in Odd(D)}Z_k$ and $U_D' =  
\prod_{j\in D}Z_j \prod _{k\in Odd(D)\setminus i}X_k$.}\label{fig:recoverQQ}
\end{figure}

\begin{proof}
We shall prove this theorem in parts. For convenience, we shall ignore the normalization factors for
quantum states.
\begin{compactenum}[(i)]
\item Encryption of the secret:
Assume that the secret to be encoded is $\ket{\psi} = a\ket{0} +b\ket{1}$. Then 
it can be easily verified that in Fig.~\ref{fig:encrypyQQ}, the state $\ket{\psi}\ket{\ms{G}
\setminus i }$ is  transformed to the following state prior measurement (up to normalization): 
\be 
\ket{0}(a\ket{\ms{G}\setminus i}+bZ_{N_i}\ket{\ms{G}\setminus i})+
\ket{1}\left(a\ket{\ms{G}\setminus i} - bZ_{N_i}\ket{\ms{G}\setminus i}\right)
\ee
If we measure zero, then we get the desired state but if we measure one, then we have to apply the correction operator $K_j'= X_j\prod_{k\in N_i\setminus i }Z_k$ for any $j\in N_i$. This operator
anti commutes with the operator $Z_{N_i}$,  but stabilizes the state $\ket{\ms{G}\setminus i}$,
therefore it acts as a correction operator to give the state in equation~\eqref{eq:qqEncrypt}.
 
 \item Recovery: 
Before we show that $D\cup Odd(D)\setminus i $ is authorized, we need the following result. 
Note that $K_D=\prod_{j\in D} K_j $, therefore $ \supp(K_D)= D\cup Odd(D)$ and $i\in \supp(K_D)$. Because of the fact
$PP^t=I$, the element $U_D=X_iU_D'=\prod_{j\in D} Z_j \prod_{k\in Odd(D)}X_k \in S(\ket{\ms{G}})$. 
Let us  write $\ket{\ms{G}}$
as 
\be
\ket{\ms{G}} &=& \ket{0}\ket{\ms{G}\setminus i}+\ket{1}Z_{N_i}\ket{\ms{G}\setminus i},
\ee
Then  it follows that 
\be
U_D\ket{\ms{G}} &=& \ket{1}U_D'\ket{\ms{G}\setminus i}+\ket{0}U_D'Z_{N_i}\ket{\ms{G}\setminus i}.
\ee 
Hence, 
 $U_D'Z_{N_i}$ stabilizes $\ket{\ms{G}\setminus i }$. 
 
With respect to the recovery observe that the set $D\cup Odd(D)$  as in Eq.~\eqref{eq:qqGenAccess}
satisfies the requirements of Proposition~\ref{prop:authSets}, therefore if we trace through the
circuit given in Fig.~\ref{fig:recoverQQ}, the state transforms as follows:
\be
\ket{+}\ket{{\overline\psi}} & = & (\ket{0}+\ket{1})
(a\ket{\ms{G}\setminus i}+bZ_{N_i}\ket{\ms{G}\setminus i})\\
&\stackrel{c-K_D}{\longrightarrow}& \ket{0}(a\ket{\ms{G}\setminus i}+bZ_{N_i}\ket{\ms{G}\setminus i})+\\
& &
\ket{1}(aK_D\ket{\ms{G}\setminus i}+bK_DZ_{N_i}\ket{\ms{G}\setminus i})\\
&\stackrel{H}{\longrightarrow}& a\ket{0}\ket{\ms{G}\setminus i} + 
b\ket{1}Z_{N_i}\ket{\ms{G}\setminus i}\\
&\stackrel{c-U_D'}{\longrightarrow}& a\ket{0}\ket{\ms{G}\setminus i} + 
b\ket{1}\ket{\ms{G}\setminus i}\\
& =& (a\ket{0}+b\ket{1})\ket{\ms{G}\setminus i}
\ee
where we used the fact that $U_D' Z_{N_i}$ stabilizes $\ket{\ms{G}\setminus i}$.
Thus $D\cup Odd(D)\setminus i$ is able to reconstruct the quantum secret $\ket{\psi}$.
The no-cloning theorem now implies that the complement of this set is unauthorized. 

\item Completeness of $\Gamma_{i,\rm{gen}}$:
Now we show that the access structure as defined in Eq.~\eqref{eq:qqGenAccess} is 
complete in the sense that every authorized set contains some element of $\Gamma_{i,\rm{gen}}$.

Assume that there exists some set $A$ which is authorized
but not generated by $\Gamma_{i,\rm{gen}}$. The complement of this set is unauthorized. Let this be denoted as
$B$. Let $A_i =A\cap N_i$ and $B_i = B\cap N_i$, then $B_i = N_i\setminus A_i$.
Since $B$ is unauthorized, it must have $|B_i|=0\bmod 2$ or $Odd(B_i) \not\subseteq B$.
(a) Suppose that it is the case that $|B_i|=0\bmod 2$. Since $PP^t=I$ , $|N_i| = 1\bmod 2 $ and
$|B_i|=0\bmod 2$, it follows that $|A_i|=1\bmod 2$. If $Odd(A_i)\subseteq A$, then it is 
 generated by $\Gamma_{i,\rm{gen}}$, therefore, it must be the case that $Odd(A_i) \not\subseteq A $.
By Lemma~\ref{lm:bpOdd},  $Odd(Odd(i))= Odd(N_i)=i$ it follows that $Odd(B_i)  = Odd(A_i)\setminus \{i\} $. Consider the set 
$ C= Odd(B_i) \setminus A$. This has support only in the unauthorized set $B$. Then $Odd(C)\cup B_i$
has an odd neighbourhood which is given by $ Odd(B_i)\setminus C$ and it lies entirely in 
$A$. Further, the set $A'= A\setminus Odd(C)$ has an odd neighbourhood $Odd(B_i)
\setminus C \cup {i}$. Thus the new set $A' \subseteq A$ is an element of $\Gamma_{i,\rm{gen}}$
contrary to assumption that $A$ is not  generated by $\Gamma_{i,\rm{gen}}$. 

(b) If on the other hand, $|B_i|=1\bmod 2$, then by exactly the same argument but reversing the roles of
$A$ and $B$, we see that $B$ is generated by an element of $\Gamma_{i,\rm{gen}}$ and is therefore
authorized. This would violate the no-cloning theorem as $A$ and $B$ are disjoint. Thus it is not 
possible for any authorized set to exist outside the access structure generated by $\Gamma_{i,\rm{gen}}$.

\item Perfectness of $\Gamma_{i,\rm{gen}}$: We also need to show that this scheme is perfect, namely,
there are no unauthorized sets which although unable to reconstruct the secret are still able to extract some information about the secret. Because it is a pure state scheme by construction, it is sufficient
to show that the complement of an unauthorized set is authorized. This can be shown using 
almost the same argument as in (iii) but this time assuming that $A$ and its complement $B$ are both unauthorized.
In this case arguing as (iii a), we would conclude that $A \in \Gamma_i$ contrary to the assumption
that $A$ is unauthorized or else arguing as in (iii b) we would conclude that $B$ is authorized.
This ensures that the QQ scheme constructed is perfect. 
\end{compactenum}
\end{proof}

The access structure realized by this scheme is same as the access structure realized by the 
quantum secret sharing scheme using the approach of quantum error correcting codes as the following
result shows.

\begin{corollary}
Let $Q$ be an $[[n,0]]$ CSS code, with the stabilizer matrix, 
\ben
S= \left[\ba{cc|cc}I & P & 0 & 0 \\0 & 0 &I &P   \ea\right],
\een 
where $PP^t=I$.
Then the $[[n-1,1]]$ quantum code obtained by puncturing the $i$th qubit realizes the QQ secret sharing protocol 
of Theorem~\ref{th:qqCSS}.
\end{corollary}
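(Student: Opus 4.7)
The plan is to show that, up to local Clifford unitaries, $Q$ is equivalent to the graph state $\ket{\ms{G}}$ of Theorem~\ref{th:qqCSS}, and that puncturing the $i$th qubit then produces the $[[n-1,1]]$ code whose code space is spanned by $\ket{\ms{G}\setminus i}$ and $Z_{N_i}\ket{\ms{G}\setminus i}$.

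First I would apply Hadamards to every qubit in the bipartition $V_r$. Under this local Clifford the $X$-type block $(I \mid P \mid 0 \mid 0)$ transforms into $(I \mid 0 \mid 0 \mid P)$, which is exactly the set of graph-state generators $K_v$ for $v \in V_\ell$. The $Z$-type block $(0 \mid 0 \mid I \mid P)$ becomes $(0 \mid P \mid I \mid 0)$; left-multiplying by $P^t$ (using $P^tP = I$, which follows from $PP^t=I$ since $P$ is square) reduces this to $(0 \mid I \mid P^t \mid 0)$, namely the $K_v$ for $v \in V_r$. Hence the unique stabilized state of $Q$ is locally Clifford equivalent to $\ket{\ms{G}}$.

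Next I would expand $\ket{\ms{G}} = \prod_{(u,v)\in E}CZ_{u,v}\ket{+}^{\otimes n}$ by factoring out the edges incident to $i$, yielding the standard identity
\[
\ket{\ms{G}} = \tfrac{1}{\sqrt{2}}\bigl(\ket{0}_i\ket{\ms{G}\setminus i} + \ket{1}_i Z_{N_i}\ket{\ms{G}\setminus i}\bigr),
\]
which exhibits qubit $i$ as maximally entangled with the two-dimensional subspace spanned by $\ket{\ms{G}\setminus i}$ and $Z_{N_i}\ket{\ms{G}\setminus i}$ on the remaining qubits; orthogonality of these two vectors follows from $\bra{+}Z\ket{+}=0$ together with the commutation of $CZ$ with $Z$. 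I would then identify this subspace as the code space of the punctured code by computing its stabilizer explicitly: take the subgroup $S_0 \subseteq S(\ket{\ms{G}})$ of elements acting as $I$ on qubit $i$ and delete the $i$th tensor factor. $S_0$ is generated by $\{K_v : v \notin N_i\cup\{i\}\}$ together with $\{K_vK_{v'} : v,v'\in N_i\}$, giving $n-2$ independent generators on $n-1$ qubits, so $k'=1$. One then checks that $Z_{N_i}$ and any $K_v'$ with $v \in N_i$ commute with $S_0$ but lie outside it, forming a logical $\overline{Z}, \overline{X}$ pair whose common invariant subspace is exactly $\mathrm{span}\{\ket{\ms{G}\setminus i}, Z_{N_i}\ket{\ms{G}\setminus i}\}$.

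Because the punctured code and the QQ code from Theorem~\ref{th:qqCSS} share the same code space, they define the same access structure. The main obstacle I anticipate is the puncturing step: one must confirm that $Z_{N_i}$ descends to a non-trivial logical operator rather than collapsing into $S_0$. Concretely this amounts to verifying that the products $K_vK_{v'}$ generate only the even-parity combinations of $\{K_v : v \in N_i\}$, so that the punctured stabilizer is a proper codimension-one subgroup of $S(\ket{\ms{G}\setminus i})$. Once that is in hand, matching back to the encoding $a\ket{0}+b\ket{1}\mapsto a\ket{\ms{G}\setminus i}+bZ_{N_i}\ket{\ms{G}\setminus i}$ of Theorem~\ref{th:qqCSS} is immediate.
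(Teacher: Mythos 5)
Your proposal is correct and follows essentially the same route as the paper: both reduce the corollary to identifying the code space of the punctured code with $\mathrm{span}\{\ket{\ms{G}\setminus i},\, Z_{N_i}\ket{\ms{G}\setminus i}\}$ (up to Hadamards on one bipartition), the only difference being bookkeeping --- the paper writes $P=\left[\ba{c}g\\ Q\ea\right]$ and row-reduces the binary stabilizer matrix of the punctured CSS code directly, while you obtain the same punctured stabilizer as the subgroup $S_0$ of $S(\ket{\ms{G}})$ acting trivially on qubit $i$ together with the vertex-deletion identity for $\ket{\ms{G}}$. Your closing worry is already settled by your own count: $S_0$ has $n-2$ independent generators on $n-1$ qubits while the only pure-$Z$ element of a graph-state stabilizer is the identity, so $Z_{N_i}$ cannot collapse into $S_0$ and genuinely is the logical $\overline{Z}$.
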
 
\begin{proof}
It suffices to show that the quantum states in Eq.~\eqref{eq:qqEncrypt} form a basis for the quantum
code obtained by puncturing the $i$th qubit. 
Without loss of generality we can assume that we puncture the 0th qubit. 
Let $P= \left[ \ba{c}g\\ Q\ea \right]$, where $(0|g) \in \F_2^n$
and $\supp(0|g)= N_i$. Note that $g\neq 0$ because of the requirement $PP^t=I$.
Then puncturing the $i$th qubit results in an $[[n-1,1]]$
quantum code. The stabilizer matrix for this code is 
\be
\left[\ba{cc|cc}I & Q & 0 & 0 \\ 0 & 0 & I & Q \ea\right],
\ee
while the encoded operators are given by 
\be
\left[\ba{c} \overline{X}\\ \overline{Z} \ea\right]= \left[\ba{cc|cc} 0 & g & 0 & 0 \\0 & 0 & 0 & g 
\ea\right].
\ee
Consider the state stabilized by $S$ and $\overline{Z}$. Its stabilizer matrix is 
\be
\left[\ba{cc|cc}I& Q & 0&0 \\0 & 0 & 0 & g\\0 & 0 & I & Q \ea \right].
\ee
This matrix is equivalent to the following, through row transformations,
\be
\left[\ba{cc|cc}I& Q & 0&0 \\0 & 0 & Q^t & I \ea \right],
\ee
which is precisely the stabilizer of the state $I^{\otimes n/2-1} H^{\otimes n/2}\ket{\ms{G}\setminus i }$. The state stabilized by $S$ and $\overline{X}$ on the other hand is 
$I^{\otimes n/2-1} H^{\otimes n/2}Z_{N_i}\ket{\ms{G}\setminus i }$.
\end{proof}

\begin{remark}
In this paper we have only considered secret sharing schemes where the share distributed to each
party is of the same dimension as the dimension of the secret. Such schemes are said to be ideal. 
The graph state framework has not been used to study schemes which are not ideal. Such a need
arises because there are some schemes that are not ideal.
\end{remark}

Our results make it possible to answer some questions related to the graph state formalism very
easily as exemplified by the following theorem.

\begin{theorem}
There do not exist any graph state QQ secret sharing protocols for $((k,2k-1))$ if $k\geq 4$.
\end{theorem}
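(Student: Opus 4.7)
The plan is to suppose for contradiction that a graph state QQ scheme realizing the $((k,2k-1))$ threshold access structure exists, and to extract incompatible combinatorial constraints on the underlying matrix $P$ via Theorem~\ref{th:qqCSS} and its corollary. By those results any such scheme is built on a bipartite graph over $V_l \cup V_r$ with $|V_l|=|V_r|=k$, adjacency $\bmt{cc} 0 & P \\ P^t & 0 \emt$ satisfying $PP^t=I$, and some chosen vertex $i \in V_l$ punctured. The access structure is then generated by the family $\Gamma_{i,\rm{gen}}$ in~\eqref{eq:qqGenAccess}, so the threshold requirement translates to the statement that every element of $\Gamma_{i,\rm{gen}}$ must have cardinality at least $k$.

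First I would examine the generator associated with the unique $D \subseteq V_r$ solving $Pd = e_i$ (well-defined since $P^{-1}=P^t$): by construction $Odd(D) = \{i\}$, so the generator $D \cup Odd(D) \setminus\{i\}$ reduces to $D$ itself, of cardinality $|N_i|$. The threshold forces $|N_i| \geq k$; since $N_i \subseteq V_r$ and $|V_r|=k$, this gives $N_i = V_r$, i.e., row $i$ of $P$ is all-ones. Next, for each $j \in N_i = V_r$, the generator $D = \{j\}$ yields $\{j\} \cup N_j \setminus\{i\}$ of size $|N_j|$ (since $i \in N_j$ while $j \notin N_j$). The threshold again forces $|N_j| = k$, so every column of $P$ is all-ones. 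Combining, $P$ must be the $k \times k$ all-ones matrix $J_k$.

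The contradiction is then immediate: $J_k J_k^t$ has every entry equal to $k \bmod 2$, while $PP^t = I$ demands $1$ on the diagonal and $0$ off, which is impossible for any $k \geq 2$ and hence a fortiori for $k \geq 4$. The chief obstacle I anticipate is not the short combinatorial computation itself but rather justifying that Theorem~\ref{th:qqCSS} and its corollary exhaustively capture every graph state QQ protocol on $2k-1$ parties, and not merely the sufficient family constructed there; once that identification is granted the argument closes at once, and in fact rules out $((k,2k-1))$ for every $k \geq 2$, of which the stated bound $k \geq 4$ is a special case.
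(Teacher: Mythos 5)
Your combinatorial computation inside the bipartite family of Theorem~\ref{th:qqCSS} is correct as far as it goes, but the argument has a genuine gap at exactly the point you flag yourself, and that gap cannot be closed. Theorem~\ref{th:qqCSS} and its corollary describe only a \emph{restricted} class of graph-state QQ protocols --- those obtained by puncturing a self-dual CSS code built on a bipartite graph with $PP^t=I$ --- and the paper is explicit that for quantum secrets it establishes only a ``restricted equivalence'' with a class of CSS codes, in contrast to the full equivalence proved for CC schemes in Theorem~\ref{th:ccGF4}. A general graph-state QQ protocol need not live on a bipartite graph at all. The decisive counterexample to your reduction is the $((3,5))$ threshold scheme: it is realized by the $[[5,1,3]]$ code, equivalently by the graph state on the $5$-cycle, which is not CSS and is not of the form covered by Theorem~\ref{th:qqCSS}, yet it is the flagship QQ example of the graph-state framework. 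Your argument ``proves'' nonexistence for all $k\ge 2$, which would contradict this known scheme; that over-strong conclusion is precisely the symptom of the unjustified exhaustiveness assumption. (Within the bipartite family your calculation is sound --- the unique $D$ with $Odd(D)=\{i\}$ is $N_i$, forcing $P$ to be all-ones and contradicting $PP^t=I$ --- but this only shows that the construction of Theorem~\ref{th:qqCSS} never yields a threshold scheme, not that no graph-state scheme does.)

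The paper closes the theorem by a different and genuinely global route: it invokes the result of Rietjens, Schoenmakers and Tuyls that \emph{every} $((k,2k-1))$ quantum threshold scheme is a $[[2k-1,1,k]]$ quantum MDS code, and then the Calderbank--Rains--Shor--Sloane nonexistence of binary $[[n,1]]$ MDS codes of length greater than $5$. This applies to every graph-state QQ protocol, since any such protocol is in particular a quantum code, and it correctly stops at $k\ge 4$, leaving room for the $((3,5))$ case. To repair your argument you would need either to show that every graph-state QQ threshold protocol is locally equivalent to one of the bipartite $PP^t=I$ form --- which is false --- or to replace your reduction with the MDS-code argument, at which point the bipartite computation becomes unnecessary.
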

\begin{proof}
In \cite{rietjens05}, it was shown that every $((k,2k-1))$  quantum threshold secret sharing scheme is 
an $[[2t-1,1,t]]$ quantum MDS code. In \cite{calderbank98}, it was shown that there do not exist any
$[[n,1]]$ binary quantum MDS codes of length greater than $5$. It follows therefore, there are no
(pure state) QQ quantum threshold schemes of length $2k-1$ greater than 5, equivalently $k\geq 4$. 
\end{proof}

The existence of quantum threshold schemes  was studied at great length in \cite{javelle11}. 
Through the connection to quantum codes we are able to shed light on this issue,
 immediately improving upon the lower bound in \cite[Corollary~4]{javelle11}.

\section{Conclusion}
We showed how to construct quantum secret sharing schemes with arbitrary access structures
using the graph state formalism. These results also elucidate the connection between graph
state framework and quantum secret sharing schemes based on quantum codes \cite{cleve99,gottesman00}.

\section*{Acknowledgment}
I would like to thank Ben Fortescue for helpful discussions and Ken Brown for supporting this
research through a grant from IARPA.


\end{document}